\documentclass[12pt,a4paper]{article}
\usepackage[dvipdfmx,hiresbb]{graphicx}
\usepackage[dvipdfmx]{color}

\usepackage{color}
\usepackage{bm,enumerate,amsmath,amssymb,amsthm}
\usepackage{epsfig}
\usepackage{cite}
\usepackage{algorithm}
\usepackage{algorithmic}
\usepackage{txfonts}
\usepackage{subfigmat}
\usepackage{figsize}
\usepackage{here}
\usepackage{listings}
\usepackage{braket}
\usepackage{comment}
\usepackage{lscape}
\usepackage{bigints}

\usepackage{arxiv}
\usepackage[utf8]{inputenc} 

\theoremstyle{definition}
\newtheorem{theorem}{Theorem}
\newtheorem*{theorem*}{Theorem}

\newtheorem*{definition*}{Definition}

\newtheorem{lemma}{Lemma}

\newtheorem{remark}{Remark}

\newtheorem{proposition}{Proposition}

\renewcommand{\headeright}{}

\title{Observable-Guided Generator Selection for Improving Trainability in Quantum Machine Learning with a $ \mathfrak{g} $-Purity Interpretation under Restricted Settings}

\author{
  Hiroshi Ohno \\
  Toyota Central R \& D Labs., Inc.\\
  Aichi, Japan \\
  \texttt{oono-h@mosk.tytlabs.co.jp} \\
}

\date{\empty}

\begin{document}
\maketitle

\begin{abstract}
  To study generator design for parameterized unitaries in quantum machine learning (QML), we propose an observable-guided generator selection algorithm for $ n $-qubit Pauli-string generator pools.
  The proposed method selects generators based on two criteria: maintaining large first-order sensitivity in the gradients and suppressing second-order interference in the Hessian matrix.
  Under a restricted setting with Pauli-string observables and candidate generators, the selection problem can be formulated as a binary optimization problem that favors mutually anti-commuting generators.
  Numerical experiments on a synthetic dataset with a small-scale five-qubit circuit show that the selected generators yield faster training than random generator selection in our setting, while exhibiting similar expressibility.
  Furthermore, under additional algebraic assumptions, the proposed criteria admit an interpretation in terms of the $ \mathfrak{g} $-purity of the observable: the first-order sensitivity is proportional to the $ \mathfrak{g} $-purity, whereas the second-order interference, namely the off-diagonal elements of the Hessian matrix, is upper-bounded by it.
  These results suggest that observable-guided generator selection is a promising direction for improving trainability in restricted QML settings.
\end{abstract}

\section{Introduction}
In this study, we propose an observable-guided generator selection algorithm for parameterized unitaries in quantum machine learning to accelerate training.
In the proposed algorithm, prior to training, we select generators $ G_{j} $ such that the first-order sensitivity, namely, $ \| \, [G_{j}, O] \, \|_{F}^{2} $, is kept large, while the second-order interference, $ \sum_{j \neq k} \| \, [G_{k}, [G_{j}, O]] \, \|_{F}^{2} $ in the Hessian matrix is suppressed.
Here, we assume that the generators are $ n $-qubit Pauli strings, and the norm $ \| \cdot \|_{F} $ denotes the Frobenius norm.
Keeping the off-diagonal elements of the Hessian matrix small suppresses strong couplings among variables and prevents the local curvature structure from becoming overly complicated, thereby reducing excessive coupling of search directions.
Hence, by suppressing this interference, the training speed is expected to improve.
Furthermore, under certain conditions, these metrics can be related to $ \mathfrak{g} $-purity \cite{ragone2024} through the Casimir operator.

Previous studies on generator design have exploited first-order sensitivity, expressibility, and metrics based on dynamical Lie algebra (DLA).
For example, in ADAPT-VQE \cite{grimsley2019}, during training, an operator is selected from the operator pool based on the magnitude of its gradient, and the ansatz is updated, gradually becoming problem-dependent.
In iterative qubit coupled cluster (iQCC) \cite{ryabinkin2020}, at each epoch, generator selection is performed based on gradient ordering, and the Hamiltonian is gradually updated.
To select generators efficiently, the direct interaction set has been introduced.
However, during training, the quantum circuit depth becomes large; thus, the likelihood of encountering barren plateaus increases.

Our contributions are summarized as follows:
\begin{itemize}
\item We propose a generator selection algorithm that explicitly controls the off-diagonal elements of the Hessian matrix, and we demonstrate its effectiveness via numerical experiments using small training datasets.
\item We reveal that the metrics used in the algorithm are connected to the $ \mathfrak{g} $-purity of the observable under certain conditions.
\end{itemize}

The remainder of the paper is organized as follows.
Related work is described in Section \ref{sec2}.
Section \ref{sec3} introduces the brief preliminaries necessary for this study, formulates a binary optimization problem, and presents the corresponding generator selection algorithm.
In Section \ref{sec4}, we describe the relationship between the metrics used in the proposed algorithm and the $ \mathfrak{g} $-purity of the observable.
Numerical experiments and results are presented in Section \ref{sec5}.
The training curves of parameterized unitaries constructed using the selected generators are compared with those obtained from randomly selected generators.
Finally, Section \ref{sec6} concludes the paper and discusses future work.

\section{Related work}\label{sec2}
\noindent
{\bf Adaptive training using a generator pool: }\\
In qubit-ADAPT-VQE \cite{tang2021}, a Pauli string pool is used instead of the fermionic operator pool employed in ADAPT-VQE, and the ansatz is iteratively grown in the same manner as in ADAPT-VQE \cite{grimsley2019}.
Qubit-ADAPT-VQE reduces circuit depth by an order of magnitude while maintaining comparable accuracy, thereby decreasing the likelihood of encountering barren plateaus.
The authors proposed a pool completeness criterion that determines whether a given pool can generate an exact ADAPT ansatz.
In addition, they demonstrated that the minimal number of pool operators satisfying this criterion increases linearly with the number of qubits.
The experimental results indicate the effectiveness and practicality of qubit-ADAPT-VQE on NISQ devices.

Qubit-ADAPT-VQE also exploits the first-order sensitivity, as in ADAPT-VQE.
Therefore, as adopted in our proposed algorithm, we believe that introducing the second-order interference into qubit-ADAPT-VQE can significantly accelerate the training process.

The iQCC with involutory linear combinations of Pauli products (QCC-ILCAP) \cite{lang2020} is a quantum algorithm for accurate ground-state electronic structure calculations on NISQ devices.
It employs a compact, adaptive ansatz built from anti-commuting, involutory linear combinations of Pauli products, reducing circuit depth and improving optimization convergence.
By iteratively selecting generators that satisfy $ A_{k}^{2} = I $, the method enables efficient implementation of $ e^{i \tau_{k} A_{k}} $ as simple rotations, enhancing numerical stability.
This shallow-circuit framework facilitates preparation of the optimal electronic wavefunction while reducing the risk of barren plateaus.

In QCC-ILCAP, operator ranking based on first-order sensitivity is employed.
Although the use of anti-commuting Pauli products enables shallow circuits and low computational cost, enhancing circuit expressibility may ultimately lead to an increase in circuit depth or the number of gates.
In iQCC and QCC-ILCAP, we believe that it is important, although challenging, to strike a balance between computational efficiency achieved through involutory operators and improved circuit expressibility achieved through iteration.
From this viewpoint, the use of a single operator composed of multiple anti-commuting Pauli products appears to be distinct from the second-order interference criterion adopted in our proposed algorithm.

\noindent
{\bf Symmetry-aware: }\\
In Ref. \cite{meyer2023}, the authors proposed a method called gate symmetrization, based on representation theory, in which a standard gate set is transformed into an equivalent one that respects the symmetry of the task.
Through this procedure, inductive bias is incorporated into the circuit, thereby improving generalization performance.
Experimental results on toy problems showed better generalization, demonstrating the effectiveness of symmetry-aware ansatz design.
In gate symmetrization, in order to construct a symmetry-preserving parameterized unitary, its generator is chosen so as to commute with the representation of the symmetry group.
For continuous groups, this condition can be formulated in terms of commutation with the generators of the group, namely, the Lie algebra.

To further improve generalization performance and training speed, we believe it is desirable to exploit the relationship between observables and generators, as well as information about the cost-function landscape, such as first-order sensitivity and second-order interference.

\noindent
{\bf Barren plateaus and $ \mathfrak{g} $-purity: }\\
In Ref. \cite{ragone2024}, the variance of the cost function for a parameterized unitary $ U(\theta) $ (a deep circuit) is given by
\begin{equation*}
  {\rm Var}_{\theta} [l_{\theta} (\rho, O)] = \frac{ \mathcal{P}_{\mathfrak{g}} (\rho) \, \mathcal{P}_{\mathfrak{g}} (O) }{ \dim(\mathfrak{g}) },
\end{equation*}
where $ \mathfrak{g} $ denotes the DLA of the generators of the unitary and is simple, $ l $ denotes the cost function, $ \rho \in i \mathfrak{g} $ denotes input state, and $ O \in i \mathfrak{g} $ denotes an observable.
When the dimension of the DLA is large (high expressibility), the $ \mathfrak{g} $-purity of the input state is small (strong generalized entanglement), or the $ \mathfrak{g} $-purity of the observable is small (low generalized locality), the variance becomes small, thereby increasing the likelihood of barren plateaus.
In the derivation of the above equation, the evaluation of the gradient variance in deep circuits corresponds to two-body correlators in the two-copy space.
In the Haar average, which acts as a projection operator with respect to the Lie group action, the split quadratic Casimir is used as the simplest invariant that captures this invariant component.

\section{Method}\label{sec3}
In this section, we describe the training model considered in this study, namely, a parameterized unitary generated by generators, and propose a generator selection algorithm.

\subsection{A training model and Hessian matrix}
We consider the following model:
\begin{equation}\label{eq1}
  U(\theta) = \prod_{l=1}^{L} e^{-i \theta_{l} G_{l}},
\end{equation}
where $ G_{l} $ denote a generator, $ iG_{l} \in \mathfrak{su}(d) $, $ \theta_{l} $ denote a trainable parameter, and $ L $ denotes the circuit depth.
Next, we consider an expectation-value-based cost function.
Then, a cost function $ C(\theta) $ is defined as
\begin{equation}\label{eq2}
  C(\theta) = \braket{\psi \, | \, U^{\dagger}(\theta) O U(\theta) \, | \, \psi},
\end{equation}
where $ \ket{\psi} $ denotes an initial state and $ O $ denotes an observable.

Here, as a motivating special case, for the generators, assuming that $ [G_{j}, G_{k}] = 0 $ for all $ j $ and $ k $, we obtain the following straightforward results.
Then, the gradient and the second derivative with respect to the parameter $ \theta $ are given by
\begin{equation}\label{eq3}
  \frac{\partial C(\theta)}{\partial \theta_{j}} = i \braket{\psi \, | \, U^{\dagger}(\theta) \, [G_{j}, O] \, U(\theta) \, | \, \psi},
\end{equation}
\begin{equation}\label{eq4}
  \frac{\partial^{2} C(\theta)}{\partial \theta_{j} \theta_{k}} = -\braket{\psi \, | \, U^{\dagger}(\theta) \, [G_{k}, [G_{j}, O]] \, U(\theta) \, | \, \psi}.
\end{equation}
These expressions motivate the use of $ [G_{j},O] $ as a first-order sensitivity indicator and $ [G_{k},[G_{j},O]] $ as a second-order interference indicator.
In the general non-commuting case, these expressions are no longer exact Hessian elements of the product ansatz; nevertheless, their norms remain useful local algebraic measures of how the observable responds to infinitesimal generator actions.

From the viewpoint of avoiding barren plateaus, it is preferable for $ [G_{j}, O] $ in Eq. \ref{eq3} to be large.
To accelerate training, interactions among parameters should ideally vanish.
Therefore, in Eq. \ref{eq4}, $ [G_{k}, [G_{j}, O]] $ should ideally vanish.
Based on these considerations, we propose an algorithm for selecting generators that satisfy these conditions (criteria).

In this study, we restrict the generators to $ n $-qubit Pauli strings to simplify the derivation of the algorithm.

\subsection{Binary optimization problem}
Before formulating the generator selection problem, we emphasize that the following criterion should be understood as a local algebraic design metric rather than an exact expression for the Hessian of a general non-commuting product ansatz.
For a product unitary $ U(\theta) = \prod_{l=1}^{L} e^{-i\theta_{l} G_{l}} $, when the generators do not mutually commute, the exact gradient and Hessian generally involve parameter-dependent conjugations of the generators by the surrounding circuit layers.
In contrast, the quantities $ [G_{j}, O] $ and $ [G_{k}, [G_{j}, O]] $ considered below characterize the local response of the observable under infinitesimal transformations generated by $ G_{j} $ and $ G_{k} $.
Therefore, we use these commutator norms as observable-level surrogate metrics for first-order sensitivity and second-order interference.
Under the restricted Pauli-string setting described below, this local observable-algebra criterion leads to a simple binary optimization problem favoring mutually anti-commuting generators.

First, to construct a generator selection algorithm, we present the following proposition to formally describe the optimization problem.
\begin{proposition}\label{pro1}
  Let $ \mathcal{P}_{n} $ be the set of $ n $-qubit Pauli strings.
  Assume that for the observable $ O $, $ \{G_{j}, O\} = 0 $ and $ \{G_{k}, O\} = 0 $ with $ G_{j}, G_{k} \in \mathcal{P}_{n} $.
  Then,
  \begin{equation}
    \| \, [G_{k}, [G_{j}, O]] \, \|_{F}^{2} = \left \{
    \begin{array}{ll}
      0 & (\{ G_{k}, G_{j} \} = 0)\\
      2^{n+4} & ([G_{k}, G_{j}] = 0).
    \end{array}
    \right.
  \end{equation}
\end{proposition}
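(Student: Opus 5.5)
The plan is a direct algebraic computation that collapses the nested commutator to a single Pauli product. Throughout, I take $O$ to be an $n$-qubit Pauli string, as in the restricted setting announced before the proposition. Then $O$, $G_j$, $G_k$ are all Hermitian and unitary, each squares to $I$, and any two of them either commute or anticommute.

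First, I use $\{G_j,O\}=0$, i.e.\ $OG_j=-G_jO$, to write the inner commutator as
\begin{equation*}
  [G_j,O]=G_jO-OG_j=2G_jO .
\end{equation*}

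Next, I compute the outer commutator:
\begin{equation*}
  [G_k,[G_j,O]]=2\left(G_kG_jO-G_jOG_k\right).
\end{equation*}
In the second term I move $G_k$ past $O$ using $\{G_k,O\}=0$, so that $G_jOG_k=-G_jG_kO$. This gives the key identity
\begin{equation*}
  [G_k,[G_j,O]]=2\left(G_kG_j+G_jG_k\right)O=2\{G_k,G_j\}\,O .
\end{equation*}

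The two cases then follow at once.
\begin{itemize}
  \item If $\{G_k,G_j\}=0$, the expression vanishes, so its Frobenius norm is $0$.
  \item If $[G_k,G_j]=0$, then $\{G_k,G_j\}=2G_kG_j$, and hence $[G_k,[G_j,O]]=4G_kG_jO$.
\end{itemize}
In the second case, $G_kG_jO$ is a product of unitaries and therefore unitary; it is in fact a Pauli string up to a phase. Writing $M=G_kG_jO$, we have $\|M\|_F^2=\mathrm{tr}(M^\dagger M)=\mathrm{tr}(I_{2^n})=2^n$, and so
\begin{equation*}
  \|4M\|_F^2=16\cdot 2^n=2^{n+4}.
\end{equation*}
This argument also covers the degenerate case $G_j=G_k$, where the product reduces to $O$.

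There is no real obstacle here. The only point needing care is the implicit hypothesis that $O$ is a Pauli string, or at least unitary with $\mathrm{tr}(O^\dagger O)=2^n$. Without this, the commuting case would give $16\,\|O\|_F^2$ rather than $2^{n+4}$. I would therefore state that assumption explicitly at the start of the proof.
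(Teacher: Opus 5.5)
Your proof is correct and follows essentially the same direct computation as the paper: rewrite $[G_j,O]=2G_jO$, expand the outer commutator, use $\{G_k,O\}=0$ and the commutation or anticommutation of $G_j,G_k$, and get $\|4G_jG_kO\|_F^2=16\cdot 2^n$ from $O$ being a Pauli string, which the paper also assumes implicitly. Packaging both cases into the single identity $[G_k,[G_j,O]]=2\{G_k,G_j\}O$ is slightly cleaner than the paper's case split. In the anticommuting case the paper writes an incorrect intermediate expression $2G_j[G_k,O]$, although its final expression $-2G_j\{G_k,O\}=0$ is right.
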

\begin{proof}
  Since $ \{G_{j}, O\} = 0 $, we have $ [G_{j}, O] = 2 G_{j} O $.
  Then, $ [G_{k}, [G_{j}, O]] = 2 (G_{k} G_{j} O - G_{j} O G_{k}) $.
  When $ [G_{k}, G_{j}] = 0 $, $ [G_{k}, [G_{j}, O]] = 2 G_{j} [G_{k}, O] = 4 G_{j} G_{k} O$.
  Since $ G_{j} $, $ G_{k} $, and $ O $ are $ n $-qubit Pauli strings, $ \| \, 4 G_{j} G_{k} O \, \|_{F}^{2} = 2^{4} (\sqrt{2^{n}})^{2} $.
  Therefore, $ \| \, [G_{k}, [G_{j}, O]] \, \|_{F}^{2} = 2^{n+4} $.
  Next, when $ \{G_{k}, G_{j}\} = 0 $, $ [G_{k}, [G_{j}, O]] = 2 G_{j} [G_{k}, O] = -2 G_{j} \{G_{k}, O\}$.
  Therefore, $ \| \, [G_{k}, [G_{j}, O]] \, \|_{F}^{2} = 0 $.
\end{proof}
Proposition \ref{pro1} implies that, under the present assumptions, the off-diagonal elements of the Hessian matrix vanish whenever the selected generators mutually anticommute.
Therefore, maximizing the number of anticommuting generator pairs provides a principled way to suppress second-order interference.

\begin{remark}
The formulation in Proposition \ref{pro1} relies on restrictive assumptions, namely, a Pauli-string observable and a Pauli-string generator pool with $ \{ G, O \} = 0 $.
Therefore, the resulting binary optimization problem should be understood as a specialized formulation for this setting.
Generalization to broader classes of observables and generators remains an important direction for future work.
\end{remark}

Next, let the candidate set be $ S \coloneqq \{ G \in \mathcal{P}_{n} \; | \; \{G,O\} = 0 \} $.
From Proposition \ref{pro1}, for $ G_{j} $, $ G_{k} \in S $, the binary optimization problem is formulated as
\begin{equation}
  \max_{x_{j} \in \{0, 1\}} \sum_{j < k} c_{jk} x_{j} x_{k} \; {\rm subject} \; {\rm to} \sum_{j} x_{j} = L.
\end{equation}
where
\begin{equation}
    c_{jk} = \left \{
    \begin{array}{ll}
      1 & (\{ G_{k}, G_{j} \} = 0)\\
      0 & ([G_{k}, G_{j}] = 0).
    \end{array}
    \right.
\end{equation}
Note that this optimization problem may be NP-hard.
Thus, we use a brute-force algorithm to solve it because, in our numerical experiments, the number of qubits is set to five, which is relatively small.
For larger qubit systems, genetic algorithms, which are a type of metaheuristic algorithm, may be useful in practice.

\section{Relationship to $ \mathfrak{g} $-purity of an observable}\label{sec4}
In this section, we investigate the relationship between the gradient or Hessian matrix and the $ \mathfrak{g} $-purity of an observable \cite{ragone2024}.
In the elements of Hessian matrix of Eq. \ref{eq4}, we observe a double commutator such as $ [G_{k}, [G_{j}, O]] $.
Therefore, in order to relate this to the $ \mathfrak{g} $-purity of an observable, we need to assume a condition under which this double commutator can be regarded as a Casimir operator.

First, we state the following theorem on the gradient.
\begin{theorem}\label{th1}
  Assume that the generators $ i G_{j} \in \mathfrak{su}(d) $, and that $ G_{j} $ is an orthonormal basis of traceless Hermitian generators normalized as $ {\rm Tr}(G_{j} G_{k}) = \delta_{jk} $.
  Let the observable $ O \in {\rm span}(G_{j}) $.
  Then, we have
  \begin{equation}
    \sum_{j = 1}^{d^{2}-1} \| \, [G_{j},O] \, \|_{F}^{2} = c \| \, O \, \|_{F}^{2},
  \end{equation}
  where $ c $ denotes a constant.
\end{theorem}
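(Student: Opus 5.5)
The plan is to express the sum $\sum_j \|[G_j,O]\|_F^2$ through the quadratic Casimir of $\mathfrak{su}(d)$ acting in the adjoint representation, and then use Schur's lemma to conclude that this Casimir is a scalar on the traceless operators.

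First, expand the Frobenius norm. For Hermitian $G_j$ and $O$, the commutator $[G_j,O]$ is anti-Hermitian. Hence
\begin{equation*}
\| [G_j,O] \|_F^2 = {\rm Tr}\bigl([G_j,O]^\dagger [G_j,O]\bigr) = -{\rm Tr}\bigl([G_j,O][G_j,O]\bigr).
\end{equation*}
Using cyclicity of the trace, this equals ${\rm Tr}\bigl(O\,[G_j,[G_j,O]]\bigr)$. Summing over $j$ therefore gives
\begin{equation*}
\sum_j \|[G_j,O]\|_F^2 = {\rm Tr}\bigl(O\, \mathcal{C}(O)\bigr), \qquad \mathcal{C}(X) := \sum_{j=1}^{d^2-1} [G_j,[G_j,X]] = \sum_j {\rm ad}_{G_j}^2 (X).
\end{equation*}
The superoperator $\mathcal{C}$ is, up to sign, the quadratic Casimir of $\mathfrak{su}(d)$ in the adjoint representation. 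The adjoint representation acts on the space $\mathfrak{sl}(d)$ of traceless matrices, which contains ${\rm span}(G_j)$ and hence $O$.

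Second, show that $\mathcal{C}$ is a scalar on traceless operators. Because $\{G_j\}$ is orthonormal with respect to the trace form, which is invariant, $\mathcal{C}$ does not depend on the choice of orthonormal basis. Therefore $U\mathcal{C}(U^\dagger X U)U^\dagger = \mathcal{C}(X)$ for all $U\in SU(d)$, i.e.\ $\mathcal{C}$ commutes with the adjoint action. The adjoint representation of $\mathfrak{su}(d)$ is irreducible over the complexification $\mathfrak{sl}(d,\mathbb{C})$, since $\mathfrak{su}(d)$ is simple. Schur's lemma then gives $\mathcal{C} = c\,{\rm id}$ on this space.

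To avoid relying entirely on representation theory, I would also compute $c$ directly. Use the completeness relation
\begin{equation*}
\sum_j (G_j)_{ab}(G_j)_{cd} = \delta_{ad}\delta_{bc} - \tfrac{1}{d}\delta_{ab}\delta_{cd},
\end{equation*}
which holds because $\{G_j\}$ together with $I/\sqrt{d}$ forms an orthonormal basis of all $d\times d$ matrices. From it, $\sum_j G_j X G_j = {\rm Tr}(X)\,I - X/d$ and $\sum_j G_j^2 = \frac{d^2-1}{d} I$. Expanding $[G_j,[G_j,X]] = G_j^2X + XG_j^2 - 2G_jXG_j$ and inserting these identities for traceless $X$ gives
\begin{equation*}
\mathcal{C}(X) = \frac{2(d^2-1)}{d}X + \frac{2}{d}X = 2d\,X .
\end{equation*}
So $c = 2d$, and the result is
\begin{equation*}
\sum_j \|[G_j,O]\|_F^2 = 2d\,{\rm Tr}(O^2) = 2d\,\|O\|_F^2 .
\end{equation*}

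The main obstacle is keeping the normalization and sign conventions straight. One has to handle the anti-Hermiticity of the commutator correctly. One also has to verify that the completeness relation uses exactly the normalization ${\rm Tr}(G_jG_k)=\delta_{jk}$, with the identity component removed, and that $O$ being traceless kills the ${\rm Tr}(X)I$ term. If the direct computation gives a different constant, the Schur's lemma argument still establishes that some constant $c$ exists, which is all the theorem claims.
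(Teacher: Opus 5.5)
Your proof is correct and follows the paper's route: you reduce $\|[G_j,O]\|_F^2$ to ${\rm Tr}(O\,[G_j,[G_j,O]])$ by trace cyclicity, then use irreducibility of the adjoint representation and Schur's lemma to make $\sum_j {\rm ad}_{G_j}^2$ a scalar. Your completeness-relation computation adds something the paper leaves unspecified, namely $c = 2d$ (this checks out, e.g.\ $c=4$ for normalized Pauli matrices at $d=2$), and it already proves the scalar property without appealing to representation theory.
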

\begin{proof}
  \begin{equation}
    \| \, [G_{j}, O] \, \|_{F}^{2} = {\rm Tr}( [G_{j}, O]^{\dagger} [G_{j}, O] ) = - {\rm Tr}( [G_{j}, O]^{2} ).
  \end{equation}
  Then we have
  \begin{equation}\label{eq-th11}
    - {\rm Tr}( [G_{j}, O]^{2} ) = {\rm Tr}(-G_{j} O G_{j} O) + {\rm Tr}(O G_{j} G_{j} O) + {\rm Tr}(G_{j} O O G_{j}) + {\rm Tr}(-O G_{j} O G_{j}).
  \end{equation}
  Next, we expand $ {\rm Tr}( O [G_{j}, [G_{j}, O]] ) $ as follows:
  \begin{equation}\label{eq-th12}
    {\rm Tr}( O [G_{j}, [G_{j}, O]] ) = {\rm Tr}(-O G_{j} O G_{j}) + {\rm Tr}(O G_{j} G_{j} O) + {\rm Tr}(-G_{j} O G_{j} O) + {\rm Tr}(O O G_{j} G_{j}).
  \end{equation}
  Using the cyclic property of the trace, Eq. \ref{eq-th11} is equal to Eq. \ref{eq-th12}; therefore, $ - {\rm Tr}( [G_{j}, O]^{2} ) = {\rm Tr}( O [G_{j}, [G_{j}, O]] ) $.
  We consider $ \sum_{j} {\rm Tr}(O [G_{j}, [G_{j}, O]]) $.
  Here, $ \mathcal{C}_{{\rm ad}} \coloneqq \sum_{j} ad_{G_{j}}^{2} $, where $ ad_{G_{j}} (X) \coloneqq [G_{j}, X] $.
  $ i G_{j} \in \mathfrak{su}(d) $, $ G_{j} $ is an orthonormal basis, and $ O \in {\rm span}(G_{j}) $.
  Since $ \mathfrak{su} $ is simple, and the adjoint representation of $ \mathfrak{su}(d) $ is irreducible, $ \mathcal{C}_{\rm ad} $ acts as $ c I $ on the adjoint representation, where $ c $ is the eigenvalue of the quadratic Casimir operator in the adjoint representation of $ \mathfrak{su}(d) $.
  Hence, we have $ \sum_{j} [G_{j}, [G_{j}, O]] = c O $.
  Therefore, $ \sum_{j} {\rm Tr}(O [G_{j}, [G_{j}, O]]) = {\rm Tr}(O c O) = c {\rm Tr}(O^{2}) = c \| \, O \, \|_{F}^{2} $.
\end{proof}

Here, we present two useful lemmas related to Theorem \ref{th1}.
\begin{lemma}\label{lem1}
  Under the same assumption as in Theorem \ref{th1}, we have
  \begin{equation}
    \sum_{j,k} \| \, [G_{k}, [G_{j}, O]] \, \|_{F}^{2} = c^{2} \| \, O \, \|_{F}^{2}.
  \end{equation}
\end{lemma}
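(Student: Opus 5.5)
Since $G_k$ is Hermitian, the map $\mathrm{ad}_{G_k}$ is anti-self-adjoint with respect to the Hilbert--Schmidt inner product, so I will move the outer commutator across the trace and reduce $\sum_{j,k}\|[G_k,[G_j,O]]\|_F^2$ to two applications of the Casimir identity $\mathcal{C}_{\rm ad}(O)=cO$ from the proof of Theorem \ref{th1}. Fix $j$ and set $X_j \coloneqq [G_j,O]$. Because $O$ and $G_j$ are Hermitian, $X_j$ is anti-Hermitian, so $iX_j$ is Hermitian and traceless. Moreover $iX_j \in \mathrm{span}(G_k)$, since $\{G_k\}$ is an orthonormal basis of all traceless Hermitian matrices. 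This allows Theorem \ref{th1} to be applied with $iX_j$ in place of $O$:
\begin{equation*}
\sum_k \| \, [G_k, X_j] \, \|_F^2 = \sum_k \| \, [G_k, iX_j] \, \|_F^2 = c \, \| \, iX_j \, \|_F^2 = c \, \| \, [G_j,O] \, \|_F^2 .
\end{equation*}
The phase $i$ does not change the Frobenius norm.

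Next I sum over $j$ and apply Theorem \ref{th1} once more, now to $O$ itself:
\begin{equation*}
\sum_{j,k} \| \, [G_k,[G_j,O]] \, \|_F^2 = c \sum_j \| \, [G_j,O] \, \|_F^2 = c \cdot c \, \| \, O \, \|_F^2 = c^2 \| \, O \, \|_F^2 .
\end{equation*}
This is the claimed identity.

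The one step needing care is the applicability of Theorem \ref{th1} to $X_j$. That theorem was stated for a Hermitian observable in $\mathrm{span}(G_j)$. Its proof uses $\|[G,Y]\|_F^2 = \mathrm{Tr}(Y[G,[G,Y]])$, which needs $Y$ Hermitian, since it uses $Y^\dagger = Y$. It also uses $\mathcal{C}_{\rm ad}Y = cY$, which only needs $Y$ to lie in the adjoint representation. Both conditions hold for $Y = iX_j$: it is traceless because it is a commutator, and it is Hermitian by the computation above. If one preferred to avoid the rescaling, one could instead redo the trace computation directly. Using $\mathrm{ad}_{G_k}^\dagger = \mathrm{ad}_{G_k}$ with respect to $\langle A,B\rangle = \mathrm{Tr}(A^\dagger B)$, one gets $\sum_k \|\mathrm{ad}_{G_k} X\|_F^2 = \langle X, \mathcal{C}_{\rm ad} X\rangle = c\|X\|_F^2$ for any $X$ in the complexified adjoint representation. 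This also gives the identity without appealing to Hermiticity.
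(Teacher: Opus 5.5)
Your proof is correct and is essentially the paper's argument: both apply the Casimir identity $\sum_k \mathrm{ad}_{G_k}^2 = c\,I$ first to $[G_j,O]$ and then to $O$. You package these as two invocations of Theorem~\ref{th1}, using the rescaling $X_j \mapsto iX_j$ to meet its Hermiticity hypothesis, whereas the paper writes out $\langle O, \mathrm{ad}_{G_j}\mathrm{ad}_{G_k}^2\mathrm{ad}_{G_j}O\rangle_F$ directly. One correction: your opening sentence calls $\mathrm{ad}_{G_k}$ anti-self-adjoint, but for Hermitian $G_k$ it is self-adjoint with respect to $\mathrm{Tr}(A^\dagger B)$, as you correctly use in your last paragraph; anti-self-adjointness would flip the sign of $c$.
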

\begin{proof}
  \begin{equation}
    \begin{split}
      \| \, [G_{k}, [G_{j}, O]] \, \|_{F}^{2} &= \| ad_{G_{k}}(ad_{G_{j}}(O)) \|_{F}^{2} = \| ad_{G_{k}}ad_{G_{j}}(O) \|_{F}^{2}\\
      &= \braket{ (ad_{G_{k}}ad_{G_{j}}O)^{\dagger}, ad_{G_{k}} ad_{G_{j}}O }_{F}\\
      &= \braket{ ad_{G_{k}}ad_{G_{j}}O, ad_{G_{k}} ad_{G_{j}}O }_{F}\\
      &= \braket{ ad_{G_{j}}O, ad_{G_{k}} ad_{G_{k}} ad_{G_{j}}O }_{F}\\
      &= \braket{ O, ad_{G_{j}} ad_{G_{k}}^{2} ad_{G_{j}}O }_{F}.
    \end{split}
  \end{equation}

  Here, $ \braket{X, Y}_{F} = {\rm Tr}(X^{\dagger} Y) $ and $ \braket{ ad_{G_{k}}X, Y}_{F} = \braket{ X, ad_{G_{k}}Y}_{F} $.
  Referring to the proof of Theorem \ref{th1}, since $ \sum_{k} ad_{G_{k}}^{2} = c I $.
  Therefore, we have
  \begin{equation}
    \begin{split}
      \sum_{j, k} \braket{ O, ad_{G_{j}} ad_{G_{k}}^{2} ad_{G_{j}} O }_{F} &= \sum_{j} \left< O, ad_{G_{j}} \sum_{k} ad_{G_{k}}^{2} ad_{G_{j}} O \right>_{F}\\
      &= \sum_{j} \braket{ O, c \, ad_{G_{j}} ad_{G_{j}}O }_{F}\\
      &= \left< O, c \sum_{j} ad_{G_{j}}^{2} O \right>_{F}\\
      &= c^{2} \braket{ O, O }_{F}\\
      &= c^{2} \| \, O \, \|_{F}^{2}.
    \end{split}
  \end{equation}
\end{proof}

\begin{lemma}\label{lem2}
  Under the same assumption as in Theorem \ref{th1}, we have
  \begin{equation}
    \sum_{j = 1}^{d^{2}-1} \| \, [G_{j}, [G_{j}, O]] \, \|_{F}^{2} \geq \frac{c^{2}}{d^{2} - 1} \| \, O \, \|_{F}^{2}.
  \end{equation}
\end{lemma}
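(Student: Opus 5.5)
The plan is to combine the Casimir identity from the proof of Theorem \ref{th1} with the Cauchy--Schwarz inequality. Write $N = d^{2}-1$ for the number of generators, and set $v_{j} \coloneqq [G_{j}, [G_{j}, O]] = ad_{G_{j}}^{2} O$. In the proof of Theorem \ref{th1} we established $\sum_{j} ad_{G_{j}}^{2} = c I$ on the adjoint representation, and $O \in {\rm span}(G_{j})$. This gives the vector identity
\begin{equation*}
  \sum_{j=1}^{N} v_{j} = c\, O .
\end{equation*}
The statement then follows by comparing the norm of this sum with the sum of the squared norms of its $N$ terms.

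First, take the Frobenius norm of both sides and use the triangle inequality:
\begin{equation*}
  |c| \, \| O \|_{F} = \Big\| \sum_{j} v_{j} \Big\|_{F} \le \sum_{j} \| v_{j} \|_{F} .
\end{equation*}
Second, apply Cauchy--Schwarz in $\mathbb{R}^{N}$ to the vectors $(\|v_{j}\|_{F})_{j}$ and $(1,\dots,1)$:
\begin{equation*}
  \Big( \sum_{j} \| v_{j} \|_{F} \Big)^{2} \le N \sum_{j} \| v_{j} \|_{F}^{2} .
\end{equation*}
Combining the two displays gives $c^{2} \| O \|_{F}^{2} \le (d^{2}-1) \sum_{j} \| [G_{j},[G_{j},O]] \|_{F}^{2}$, which is the claim after dividing by $d^{2}-1$.

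Equivalently, one can work entirely with inner products and avoid the triangle inequality. Using $\braket{O, O}_{F} = \frac{1}{c} \sum_{j} \braket{O, v_{j}}_{F}$ and the self-adjointness $\braket{ad_{G}X, Y}_{F} = \braket{X, ad_{G} Y}_{F}$ from Lemma \ref{lem1}, one can bound each term by $|\braket{O, v_{j}}_{F}| \le \|O\|_{F} \|v_{j}\|_{F}$ and then apply Cauchy--Schwarz over $j$. I would present the triangle-inequality version because it is shorter.

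I do not expect a serious obstacle; the content is essentially Theorem \ref{th1}'s Casimir identity plus Cauchy--Schwarz. The only point needing care is that $\sum_{j} v_{j} = cO$ holds as an operator identity, and not merely after tracing against $O$. This requires $O$ to lie in the adjoint representation space ${\rm span}(G_{j})$, which is exactly the assumption carried over from Theorem \ref{th1}. It is also worth noting that the constant $c$ may enter with a sign, depending on the convention for $ad$ and the Hermitian $G_{j}$. Only $c^{2}$ appears in the final bound, so this sign causes no difficulty.
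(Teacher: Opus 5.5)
Your proof is correct and is essentially the paper's argument. The paper combines the Casimir identity $\sum_{j} [G_{j},[G_{j},O]] = cO$ with the inequality $\sum_{j}\|A_{j}\|_{F}^{2} \ge \frac{1}{d^{2}-1}\bigl\|\sum_{j}A_{j}\bigr\|_{F}^{2}$, which it calls the Hilbert--Schmidt inequality, and your triangle-inequality-plus-Cauchy--Schwarz step is simply a derivation of that inequality. (With ${\rm Tr}(G_{j}G_{k}) = \delta_{jk}$ and traceless $O$ one finds $c = 2d > 0$, so your sign caveat is moot.)
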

\begin{proof}
  Using the Hilbert-Schmidt inequality, $ \sum_{j} \| \, A_{j} \, \|_{F}^{2} \geq \frac{1}{d^{2}-1} \| \, \sum_{j} A_{j} \, \|_{F}^{2} $.
  Here, $ A_{j} = [G_{j}, [G_{j}, O]] $.
  Referring to the proof of Theorem \ref{th1}, $ \sum_{k} ad_{G_{k}}^{2} = c I $.
  Therefore, we have
  \begin{equation}
    \begin{split}
      \sum_{j = 1}^{d^{2}-1} \| \, [G_{j}, [G_{j}, O]] \, \|_{F}^{2} &\geq \frac{1}{d^{2} - 1} \| \, \sum_{j} [G_{j}, [G_{j}, O]] \, \|_{F}^{2}\\
      &= \frac{c^{2}}{d^{2} - 1} \| \, O \, \|_{F}^{2}.
    \end{split}
  \end{equation}
\end{proof}

Finally, using Lemmas \ref{lem1} and \ref{lem2}, we obtain the following theorem concerning the Hessian matrix.
\begin{theorem}\label{th2}
  Under the same assumption as in Theorem \ref{th1}, we have
  \begin{equation}
    \sum_{j \neq k} \| \, [G_{k}, [G_{j}, O]] \, \|_{F}^{2} \leq c^{2} \frac{d^{2}-2}{d^{2}-1} \| \, O \, \|_{F}^{2}.
  \end{equation}
\end{theorem}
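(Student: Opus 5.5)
The plan is to split the full double sum of Lemma \ref{lem1} into its diagonal and off-diagonal parts and then bound the diagonal part from below using Lemma \ref{lem2}. Since the summands $\| \, [G_{k}, [G_{j}, O]] \, \|_{F}^{2}$ are all nonnegative and the index set $\{(j,k)\}$ is the disjoint union of the pairs with $j = k$ and those with $j \neq k$, I will write
\begin{equation*}
  \sum_{j \neq k} \| \, [G_{k}, [G_{j}, O]] \, \|_{F}^{2} = \sum_{j,k} \| \, [G_{k}, [G_{j}, O]] \, \|_{F}^{2} - \sum_{j=1}^{d^{2}-1} \| \, [G_{j}, [G_{j}, O]] \, \|_{F}^{2}.
\end{equation*}

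In the next step I evaluate the first term exactly by Lemma \ref{lem1}, which gives $c^{2} \| \, O \, \|_{F}^{2}$. This relies on the Casimir identity $\sum_{k} ad_{G_{k}}^{2} = cI$ on the adjoint representation, established in the proof of Theorem \ref{th1}. For the second term, which enters with a minus sign, I use the lower bound of Lemma \ref{lem2}, namely $\frac{c^{2}}{d^{2}-1} \| \, O \, \|_{F}^{2}$. That bound is itself the Cauchy--Schwarz estimate $\| \sum_{j} A_{j} \|_{F}^{2} \leq (d^{2}-1) \sum_{j} \| A_{j} \|_{F}^{2}$ over $d^{2}-1$ terms, combined with $\sum_{j} A_{j} = cO$. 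Substituting both facts gives
\begin{equation*}
  \sum_{j \neq k} \| \, [G_{k}, [G_{j}, O]] \, \|_{F}^{2} \leq c^{2} \| \, O \, \|_{F}^{2} - \frac{c^{2}}{d^{2}-1} \| \, O \, \|_{F}^{2} = c^{2} \frac{d^{2}-2}{d^{2}-1} \| \, O \, \|_{F}^{2},
\end{equation*}
which is the claimed inequality.

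The argument is mostly bookkeeping, so the main point to check carefully is that Lemma \ref{lem1} is indeed applicable. Its proof moves $ad_{G_{k}}$ across the Frobenius inner product, and this requires $ad_{G_{k}}$ to be self-adjoint with respect to $\braket{X, Y}_{F} = {\rm Tr}(X^{\dagger} Y)$. I will verify this directly. For Hermitian $G_{k}$ we have $[G_{k},X]^{\dagger} = -[G_{k},X^{\dagger}]$, and therefore
\begin{equation*}
  {\rm Tr}([G_{k},X]^{\dagger}Y) = -{\rm Tr}([G_{k},X^{\dagger}]Y) = {\rm Tr}(X^{\dagger}[G_{k},Y])
\end{equation*}
by cyclicity of the trace.

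I also need the Casimir identity to hold on every vector to which it is applied. In the proof of Lemma \ref{lem1} it acts on $ad_{G_{j}}O$, so this vector must lie in ${\rm span}(G_{j})$. It does, because the commutator of two traceless Hermitian matrices is anti-Hermitian and traceless, so $ad_{G_{j}}O$ is a complex combination of the $G_{l}$. Since $\sum_{k} ad_{G_{k}}^{2}$ is complex-linear, it still acts as $cI$ on such combinations. Once these two points are confirmed, the subtraction above is rigorous and finishes the proof.
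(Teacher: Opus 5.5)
Your proposal is correct and follows the paper's proof: you split the full double sum into diagonal and off-diagonal parts, evaluate the total as $c^{2}\|O\|_{F}^{2}$ by Lemma~\ref{lem1}, bound the diagonal part from below by Lemma~\ref{lem2}, and subtract. Your two extra checks are also correct, and they make explicit details the paper's proof of Lemma~\ref{lem1} leaves implicit: that $ad_{G_{k}}$ is self-adjoint under the Frobenius inner product, and that $ad_{G_{j}}O$ lies in the complex span of the $G_{l}$, where $\sum_{k} ad_{G_{k}}^{2}$ acts as $cI$.
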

\begin{proof}
  From Theorem \ref{th1},
  \begin{equation}
    \begin{split}
      \sum_{j, k} \| \, [G_{k}, [G_{j}, O]] \, \|_{F}^{2} &= \sum_{j} \| \, [G_{j}, [G_{j}, O]] \, \|_{F}^{2} + \sum_{j \neq k} \| \, [G_{k}, [G_{j}, O]] \, \|_{F}^{2}\\
      &= c^{2} \| \, O \|_{F}^{2}.
    \end{split}
  \end{equation}
  Therefore, from Lemmas \ref{lem1} and \ref{lem2},
    \begin{equation}
      \begin{split}
        \sum_{j \neq k} \| \, [G_{k}, [G_{j}, O]] \, \|_{F}^{2} &\leq c^{2} \| \, O \|_{F}^{2} - \frac{c^{2}}{d^{2}-1} \| \, O \|_{F}^{2}\\
        &= c^{2} \frac{d^{2}-2}{d^{2}-1} \| \, O \, \|_{F}^{2}.
      \end{split}
    \end{equation}
\end{proof}

Next, we consider the relation to the $ \mathfrak{g} $-purity of the observable $ O $.
The DLA $ \mathfrak{g} $, spanned by the generators, determines the reachable Lie group $ G $ ($ \exp(i \mathfrak{g}) $) of the quantum circuit.
Under the assumption of Theorem \ref{th1} and assuming $ O \in \mathfrak{g} $, the $ \mathfrak{g} $-purity of the observable $ O $ is defined as $ \mathcal{P}_{\mathfrak{g}}(O) \coloneqq \| \, O \, \|_{F}^{2} $.
Here, $ \{G_{j}\} $ is an orthonormal basis of the DLA.
Roughly speaking, the $ \mathfrak{g} $-purity of an observable measures how well the observable matches the DLA.
Therefore, the sum of the gradient elements is proportional to $ \mathcal{P}_{\mathfrak{g}}(O) $, and the sum of the off-diagonal elements of the Hessian matrix is upper-bounded by $ \mathcal{P}_{\mathfrak{g}}(O) $.
Theorem \ref{th1} implies that when $ \mathcal{P}_{\mathfrak{g}}(O) $ is small, the gradient becomes small, thereby increasing the possibility of barren plateaus.
In contrast, Theorem \ref{th2} implies that even if $ \mathcal{P}_{\mathfrak{g}}(O) $ is large, the sum of the off-diagonal elements can still become small, thereby facilitating training.

\section{Numerical experiments and results}\label{sec5}
The aim of the numerical experiments is to demonstrate the improvement in training speed achieved by the proposed algorithm.
We compare the training curves obtained using generators selected by the algorithm with those obtained using randomly selected generators.

\noindent
{\bf Training data preparation: }
To prepare the training data, we used a quantum circuit with five qubits and depth $ L = 5 $.
The circuit parameters $ \theta $ were uniformly randomly selected from the range $ [-\pi, \pi] $.
The input data were uniformly randomly selected from the range $ [0, 2\pi] $.
Angle encoding with $ R_{Y} $ gates was employed.
Five generators were randomly selected from the set of 5-qubit Pauli strings.
The measurements were given by the expectation value of the observable $ O = Z \otimes I^{\otimes 4} $.
The number of training samples was set to 100 to ensure a reasonable wall time.

\noindent
{\bf Generator generation results: }
Table \ref{tab2-1} summarizes the results for randomly selected generators (Random), generators selected by the proposed algorithm (Algorithm), generators selected using the metric $ \# [G_{j}, O] = 0 $ alone ($ [G, O] $ only), and generators selected using the metric $ \# [G_{j}, G_{k}] = 0 \, (j < k) $ alone ($ [G, G] $ only).
The table also reports the Hellinger distance between the output-state distribution of each circuit and the Haar-random distribution.
Here, the Hellinger distance is used as a measure of circuit expressibility for two reasons: (1) it is bounded between 0 and 1, and (2) its coefficient of variation is empirically smaller than that of the KL divergence.
To estimate the distance distribution, we used 500 samples and 50 bins.
\begin{table}[htb]
  \centering
  \caption{Evaluation results of randomly selected generators (Random) and generators selected by the algorithm (Algorithm) and each metric. Generator selection was performed 20 times using different random seeds. The table reports the mean values, with one standard deviation shown in parentheses.}\label{tab2-1}
  \vspace{5pt}
  \begin{tabular}{ccccc}\hline
    Metric & Random & Algorithm & $ [G, O] $ only & $ [G, G] $ only\\ \hline
    $ \# [G_{j}, O] = 0 $ & 2.35 (1.35) & 0 (0) & 0 (0) & 2 (0)\\
    $ \# [G_{j}, G_{k}] = 0 $ ($ j < k $) & 5.05 (1.43) & 0 (0) & 5 (0)& 0 (0)\\
    Hellinger distance & 0.322 (0.0375) & 0.289 (0.0117) & 0.290 (0.0105) & 0.323 (0)\\ \hline
  \end{tabular}
\end{table}
For Algorithm, both $ \# [G_{j}, O] = 0 $ and $ \# [G_{j}, G_{k}] = 0 \, (j < k) $ were zero for all selected generators.
This indicates that the gradients remain sufficiently large while the off-diagonal terms vanish.
For Random, approximately half of the generators satisfied $ \# [G_{j}, O] = 0 $, whereas about half of the $ \# [G_{j}, G_{k}] = 0 \, (j < k) $ terms were zero.
In addition, the Hellinger distances were similar, suggesting that the two types of circuits have comparable expressibility.
The results for each method were consistent with the corresponding constraint it imposes.

\noindent
{\bf Training experiment setup: }
The quantum circuits (Eq. \ref{eq1})\footnote{
  The Rademacher complexity of this training model is $ \mathcal{O} \left( \frac{L}{\sqrt{M}} \right) $, where $ M $ denotes the number of training data. 
} were implemented in PennyLane \cite{bergholm2018}.
For the input data, angle encoding with $ R_{Y} $ gates was also employed.
Parameter optimization was performed using the simultaneous perturbation stochastic approximation (SPSA) \cite{gacon2021} with a momentum term of 0.5 and a learning rate of 0.001.
The initial parameters were uniformly randomly selected from the range $ [-0.1, 0.1] $.
The number of epochs was 200.
The training trials for circuits using randomly selected generators and generators selected by the proposed algorithm were each performed 20 times with different random seeds.

\noindent
{\bf Training curves: }
Figure \ref{fig3-2} shows the training curves for Random, Algorithm, and the two single-metric variants.
The average RMSE values and one standard deviation, shown as error bars, are presented.
In the figure, the RMSE in each trial was normalized by its value at epoch 0, and the mean and standard deviation were then computed across trials.
\begin{figure}[htbp]
  \centering
  \begin{minipage}{13.5cm}
    \SetFigLayout{1}{2}
    \subfigure[Algorithm and Random]{\includegraphics[width=6cm]{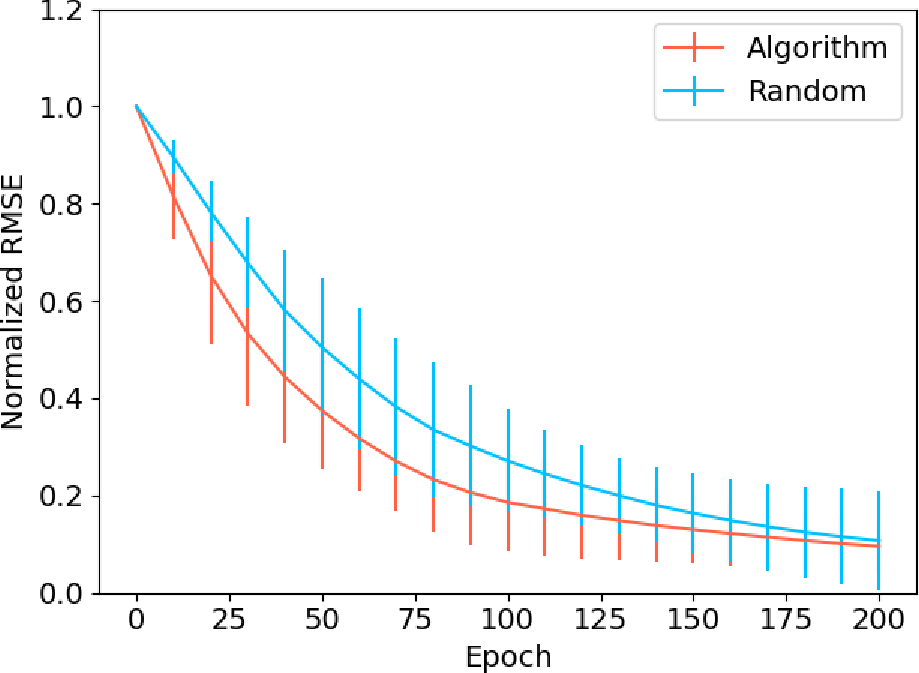}}
    \hfill
    \subfigure[Algorithm and each metric]{\includegraphics[width=6cm]{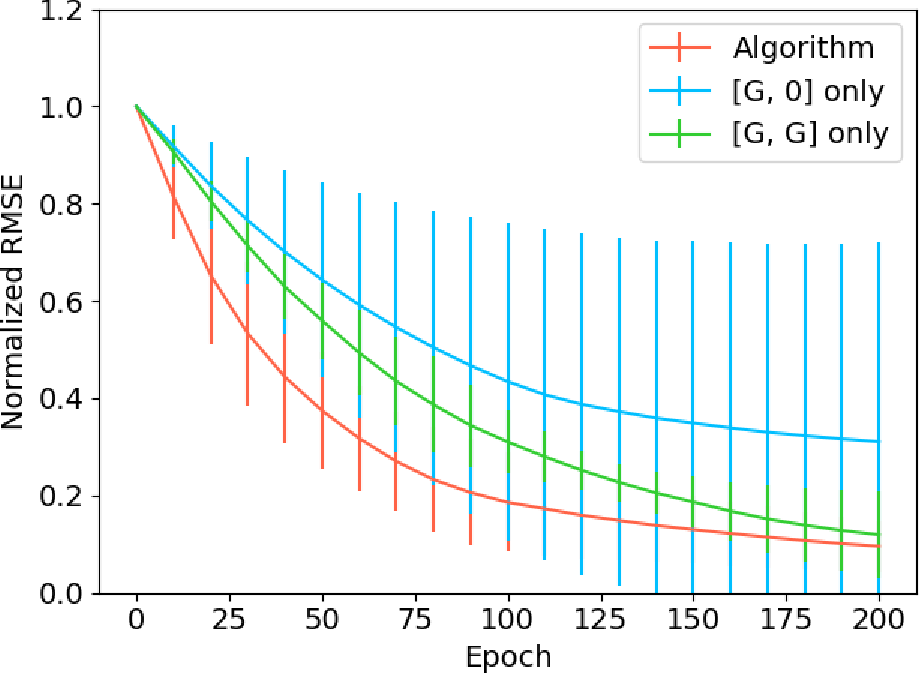}}
  \end{minipage}
  \caption{Training curves of randomly selected generators (Random) and generators selected by the algorithm (Algorithm) and each metric. Training trials of the quantum circuits using generators selected by Random and Algorithm were each performed 20 times with different random seeds. The average RMSE values are shown, and the error bars indicate one standard deviation.}\label{fig3-2}
\end{figure}
The curve for Algorithm lies below that for Random over most of the training process, although the two curves gradually converge.
This suggests that the proposed algorithm accelerates early-stage training.
These results support the effectiveness of the algorithm, particularly during the early stage of training.
At epoch 200, the original RMSE values of the two methods become closer, possibly because the two circuits have comparable expressibility.
A two-sided Student’s t-test yielded $ p = 0.063 $, indicating that the difference at this stage is not statistically significant at the 5\% level.

Comparison with the single-metric results suggests an advantage of combining the two metrics in Algorithm.
However, the variances observed for the single-metric methods were relatively large.
One possible reason is the limited size of the training dataset.
In these experiments, the number of training samples was chosen to keep the training wall time within a reasonable range.

\section{Conclusion}\label{sec6}
In this study, we investigated generator design for parameterized unitaries in QML and proposed an observable-guided generator selection algorithm to accelerate training.
The proposed algorithm selects generators from the set of $ n $-qubit Pauli strings such that the first-order sensitivity remains large while second-order interference is suppressed.
We note that the algorithm may be NP-hard.
Numerical experiments on a synthetic dataset of 100 samples using a five-qubit quantum circuit demonstrated that the proposed algorithm yields faster training than random generator selection.
Furthermore, under certain restricted conditions, we showed that the selection criterion used in the algorithm is related to the $ \mathfrak{g} $-purity of the observable \cite{ragone2024}.
Consequently, the relationship between the gradients or Hessian matrix and the $ \mathfrak{g} $-purity of the observable was clarified.

Although the present numerical results are encouraging, they are limited to small-scale settings.
Further numerical studies will be necessary to more comprehensively validate the effectiveness of the proposed algorithm.

Finally, future work will focus on the following directions:
\begin{itemize}
\item We will investigate how the formulation can be extended beyond Pauli-string observables and anti-commuting Pauli-string generator pools.
\item For large-qubit quantum circuits, we will develop computationally efficient selection algorithms based on genetic algorithms.
\item We will incorporate the proposed algorithm into adaptive training methods such as ADAPT-VQE \cite{grimsley2019,tang2021} and iQCC \cite{lang2020}, and evaluate its effectiveness.
\item We will also incorporate additional criteria, such as symmetry \cite{meyer2023} and expressibility, to further improve training speed and accuracy.
\end{itemize}

\bibliographystyle{plain}
\bibliography{references}

\end{document}